\crefname{figure}{Fig.}{Figs.}
\crefname{definition}{Def.}{Defs.}
\crefname{defn}{Def.}{Defs.}
\crefname{equation}{Eq.}{Eqs.}
\crefname{theorem}{Thm.}{Thms.}
\crefname{lemma}{Lem.}{Lems.}
\crefname{lem}{Lem.}{Lems.}
\crefname{corollary}{Cor.}{Cors.}
\crefname{proposition}{Prop.}{Props.}
\crefname{enumi}{Item}{Items}
\crefname{exmp}{E.g.}{E.g.}
\crefname{section}{Sec.}{Secs.}
\crefname{table}{Table}{Tables}
\crefname{appendix}{Appendix}{Apps.}
\crefname{lstlisting}{List.}{List.}
\Crefname{lstlisting}{List.}{List.}
\newtheorem{cor}{Corollary}
\newcommand{\qedhere}{\qed}
\crefname{figure}{Fig.}{Figs.}
\crefname{definition}{Def.}{Defs.}
\crefname{defn}{Def.}{Defs.}
\crefname{equation}{Eq.}{Eqs.}
\crefname{theorem}{Thm.}{Thms.}
\crefname{thm}{Thm.}{Thms.}
\crefname{lemma}{Lem.}{Lems.}
\crefname{lem}{Lem.}{Lems.}
\crefname{corollary}{Cor.}{Cors.}
\crefname{cor}{Cor.}{Cors.}
\crefname{proposition}{Prop.}{Props.}
\crefname{prop}{Prop.}{Props.}
\crefname{enumi}{Item}{Items}
\crefname{example}{Example}{Examples}
\crefname{section}{Sec.}{Secs.}
\crefname{table}{Table}{Tables}
\crefname{appendix}{Appendix}{Apps.}
\crefname{lstlisting}{List.}{List.}
\Crefname{lstlisting}{List.}{List.}
\tikzstyle{block} = [rectangle, draw, fill=white!20, text width=4em, text centered, minimum height=3em]
\tikzstyle{line} = [draw, very thick, color=black!90, -latex']
\newcommand{\mveqinf}{\ensuremath{\simeq_\omega}}
\newcommand{\citeMac}[1]{\cite{#1}}
\newcommand{\dcup}{\sqcup}
\renewcommand{\hSemL}[1]{\hSem{#1}}
\newcommand{\yesac}{\heartsuit}
\newcommand{\noac}{\spadesuit}
\DeclareMathOperator*{\bigparalC}{\scalerel*{\paralC}{\sum}}
\newcommand{\qedmaybe}{}
\begin{document}

\title{The Cost of Monitoring Alone\thanks{This research was partially supported by the  projects  ``TheoFoMon: Theoretical Foundations for Monitorability'' (grant number: 
			163406-051%
	) and ``Epistemic Logic for Distributed Runtime Monitoring'' (grant number: 
		184940-051%
) of the Icelandic Research Fund, by the BMBF project ``Aramis II'' (project number: 
	01IS160253%
) and the EPSRC project ``Solving parity games in theory and practice'' (project number:
	EP/P020909/1
	).}}

\author{Luca Aceto\inst{1,2}\orcidID{0000-0002-2197-3018}  \and 
		Antonis Achilleos\inst{2}\orcidID{0000-0002-1314-333X} \and
		Adrian Francalanza\inst{3}\orcidID{0000-0003-3829-7391} \and
		Anna Ing\'{o}lfsd\'{o}ttir\inst{2}\orcidID{0000-0001-8362-3075} \and
		Karoliina Lehtinen\inst{4}\orcidID{0000-0003-1171-8790}
	}

\institute{
		Gran Sasso Science Institute
    \email{luca.aceto@gssi.it}
	\and 
		Reykjavik University
	\email{\{luca,antonios,annai\}@ru.is}  
	\and
		University of Malta
	\email{adrian.francalanza@um.edu.mt}    
	\and
		University of Liverpool
	\email{karoliina.lehtinen@liverpool.ac.uk}     
}

\maketitle

\begin{abstract}
%
We compare the succinctness of two monitoring systems for properties of infinite traces, namely parallel and regular monitors.
Although a parallel monitor can be turned into an equivalent regular monitor, the cost of this transformation is a double-exponential blowup in the syntactic size of the monitors, and a triple-exponential blowup when the goal is  a deterministic monitor. 
We show that these bounds are tight and that they also hold for translations between  corresponding  fragments of 
Hennessy-Milner logic with recursion
over infinite traces. 
\end{abstract}

\section{Introduction}

Runtime Verification is a lightweight verification technique where a computational entity that we call a monitor is used to observe a system run in order to verify a given property. That property, which we choose to formalize in 
Hennessy-Milner logic with recursion (\recHML) \cite{Larsen:90:HMLRec}, can be a potential property of either the system  \cite{FraAI:17:FMSD,AceAFI:17:FSTTCS}, or of the current system run, encoded as a trace of events \cite{AcetoAFIL19} --- see also, for example, \cite{PnueliZaks:06:FM,Falcone2012,bauer2011runtime} for earlier work on the monitoring of trace properties, mainly formalized on LTL.

To address the  case of verifying trace properties, the authors introduced in \cite{AcetoAFIL19} a class of monitors that can generate multiple parallel components that analyse the same system trace. These were called parallel monitors. When some of them reach a verdict, they can combine these verdicts into one.
In the same paper, it was determined that this monitoring system has the same monitoring power as its restriction to a single monitoring component, as it was used in \cite{FraAI:17:FMSD,AceAFI:17:FSTTCS}, called regular monitors. 
However, the cost of the translation from the more general monitoring system to this fragment, as 
given in \cite{AcetoAFIL19}, is doubly exponential with respect to the syntactic size of the monitors.
Furthermore, if the goal is a deterministic regular monitor \cite{AceAFI:2017:CIAA,determinization}, then the resulting monitor is quadruply-exponentially larger than the original, parallel one, in \cite{AcetoAFIL19}.

In this paper, we show that the double-exponential cost for translating from parallel to equivalent 
regular monitors is tight. Furthermore, we improve the translation cost from parallel monitors to equivalent deterministic monitors to a triple exponential, and we show that this bound is tight.
We define monitor equivalence in two ways, the first one stricter than the second. For the first definition, two monitors are equivalent when they reach the same verdicts for the same finite traces, while for the second one it suffices to reach the same verdicts for the same infinite traces.
We prove the upper bounds for a transformation that gives monitors that are equivalent with respect to the stricter definition, while we prove the lower bounds with respect to transformations that satisfy the 
coarser 
definition. Therefore, our bounds hold for both definitions of monitor equivalence.
This treatment allows us to derive stronger results, which yield similar bounds for the case of logical formulae, as well.

In \cite{AcetoAFIL19}, we show that, when interpreted over traces, \maxHML, the fragment of \recHML that does not use least fixed points, is equivalent to the syntactically 
smaller safety fragment \SHML.
That is, every \maxHML formula can be translated to a logically equivalent \SHML formula.
Similarly to the aforementioned translation of monitors, this translation of formulae results in a formula that 
that is syntactically at most doubly-exponentially larger than the original formula.
We show that this upper bound is tight.

The first four authors have worked on the complexity of monitor transformations before in \cite{AceAFI:2017:CIAA,determinization}, where the cost of determinizing monitors is examined.
Similarly to \cite{AceAFI:2017:CIAA,determinization}, in \cite{AcetoAFIL19}, but also in this paper, we use results and techniques from Automata Theory 
and specifically 
about alternating automata \cite{alternation,FJY90}. 

In \Cref{sec:preliminaries}, we introduce the necessary background on monitors and \recHML on infinite traces, as these were used in \cite{AcetoAFIL19}.
In \Cref{sec:transformations}, we describe the monitor translations that we mentioned above, and we provide upper bounds for these, which we prove to be tight in \Cref{sec:lower}.
In \Cref{sec:logical-consequences}, we extrapolate these bounds to the case where we translate logical formulae, from \maxHML to \SHML.
In \Cref{sec:conclusions}, we conclude the paper.
Omitted proofs can be found in the appendix.




\section{Preliminaries}
\label{sec:preliminaries}



Monitors are expected to monitor for a specification, which, in our case, is written in \recHML.
We use the linear-time interpretation of the logic \recHML, as it was given in \cite{AcetoAFIL19}.
%
According to that interpretation, formulae are interpreted over
 infinite \emph{traces}.


\subsection{The model and the logic} We assume a finite set of actions
\label{ssec:model-logic}
$\act,\actb,\ldots \in \Act$
with distinguished silent action $\actt$.
%
We also assume that $\tau\not\in \Act$ and that $\acttt \in \Act\cup\sset{\tau}$, and refer to the actions in $\Act$ as \emph{visible} actions (as opposed to the silent action $\tau$).
The metavariables $\tV,\tVV \in \Trc=\Act^\omega$ range over (infinite) sequences of visible actions, 
which
abstractly represent system runs.
We also use the metavariable $T \subseteq \Trc$ to range over \emph{sets of traces}.
%
We often need to refer to \emph{finite traces}, denoted as $\ftV,\ftVV \in \Act^\ast$, to represent objects such as a finite prefix of a system run, or to traces that may be finite or infinite (\emph{finfinite traces}, as they were called in \cite{AcetoAFIL19}), denoted as $\fftV,\fftVV \in \Act^\ast \cup \Act^\omega$.
A trace (\resp finite trace, \resp finfinite trace) with action \act at its head is denoted as $\act\tV$ (\resp $\act\ftV$, \resp $\act\fftV$). Similarly a trace with a prefix \ftV\ is 
written
$\ftV\tV$.

\begin{figure}[!h]
  \textbf{Syntax}
  \begin{align*}
      \hV,\hVV \in \recHML &\bnfdef  \hTru  
      &
                               &\bnfsepp  \hFls &
             & \bnfsepp \hOr{\hV\,}{\,\hVV}  &
             & \bnfsepp \hAnd{\hV\,}{\,\hVV}  
             \\
             &~~~\bnfsepp \hSuf{\act}{\hV} &
              &\bnfsepp \hNec{\act}{\hV} &
              & \bnfsepp \hMinX{\hV} &
              & \bnfsepp \hMaxX{\hV} &
              & \bnfsepp\; X 
    \end{align*}
  \textbf{Linear-Time Semantics}

	\right) 
	\\
	last & = \rec x(0.x + 1.x + \$.\yes)
	\\
	\mV_U &= \rec x(next_\#( perm ~\paralC~ (last ~\paralD~ (smaller ~\paralC~ x)) ))
\end{align*}




\begin{proposition}\label{prp:regular-bad}
	$\mV_U$ recognizes $L_U^k$ (and $L_U^k \cdot \Sigma_\$^\omega$) and $|\mV_U| = O(k^2)$.
	Every 
	regular monitor that recognizes $L_U^k$ or $L_U^k \cdot \Sigma_\$^\omega$ must be of length $2^{\Omega(2^k)}$.
	\qedmaybe
\end{proposition}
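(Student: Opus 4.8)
The statement bundles three claims: the small parallel monitor $\mV_U$ recognises $L_U^k$ (and its $\omega$-closure), it has length $O(k^2)$, and every regular monitor for either language is of length $2^{\Omega(2^k)}$. The first two are routine given the lemmata collected in this section; the third is the substance of the proof.

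\emph{Correctness and size of $\mV_U$.} I would unfold the recursion of $\mV_U$ against the shape of the input trace. The guard $next_\#$ advances the computation to the position just after some $\#$; there $perm$ forces the following block-sequence to be a member of $W$ terminated by $\#$ (by the lemma characterising $perm$), the summand $last$ succeeds precisely when that block-sequence is the final one before $\$$, and $smaller \paralC x$ both forces — via the lemma on $smaller$ together with the lemmata on $find$ and $next_\#$ — that the current block-sequence encodes a lexicographically smaller string than the next one, and recurses. An induction on the number of remaining block-sequences then yields $L_a(\mV_U) = L_U^k$; recognition of $L_U^k \cdot \Sigma_\$^\omega$ is immediate from \Cref{lem:Loo=La.Act} and \Cref{lem:lang2-fin-inf-trc}. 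The bound $l(\mV_U) = O(k^2)$ follows by summing the size estimates of this section — $l(perm) = O(k^2)$, $l(find(\ftV)) = O(k)$ for $\ftV \in \{0,1\}^{l+1}$, $l(next_\#(\cdot))$ linear in its argument, and the $O(k)$ summands of $smaller$, each of size $O(k)$ — exactly as for \Cref{lem:mA-good-mon-for-gap-det}.

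\emph{The lower bound.} I would pass through nondeterministic automata. By \Cref{lem:subs2subs} a regular monitor's reachable states are among its submonitors, so a regular monitor $\mV$ recognising $L_U^k$ positively is, after the harmless $\stp$-padding for reactivity, essentially an NFA with $O(l(\mV))$ states accepting $L_a(\mV) = L_U^k$, and conversely the translation from an NFA to a regular monitor is linear (cf.\ \Cref{fig:transformations}); hence it suffices to show that every NFA for $L_U^k$ has $2^{\Omega(2^k)}$ states. For the $\omega$-verdict variant, a regular monitor recognising $L_U^k \cdot \Sigma_\$^\omega$ gives, by \Cref{lem:Loo=La.Act}, an NFA for a finite-extension-closed language $L$ with $L \cdot \Sigma_\$^\omega = L_U^k \cdot \Sigma_\$^\omega$; using \Cref{lem:lang2-fin-inf-trc} one checks that the fooling set described below lies in $L$ and that its ``bad'' pairs have no infinite $\Sigma_\$$-continuation inside $L_U^k \cdot \Sigma_\$^\omega$ (hence none inside $L$), so the bound transfers. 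The fooling set exploits the encoding $W$ just as $m_A^k$ does: to a subset $S$ of a set of $2^{\Omega(2^k)}$-many candidate strings one associates the legal prefix $x_S = \#\, enc(a_1)\,\#\cdots\#\, enc(a_{|S|})\,\#$, where $a_1 < \cdots < a_{|S|}$ lists $S$ in lexicographic order, so that the increasing constraint of $L_U^k$ makes each $x_S$ a prefix of some word in $L_U^k$. One then chooses, for each $S$, a completion $y_S$ with $x_S y_S \in L_U^k$ such that for $S \neq S'$ the word $x_S y_{S'}$ is \emph{irreparably} outside $L_U^k$ — no extension lands it in $L_U^k \cdot \Sigma_\$^\omega$. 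Since there are $2^{\Omega(2^k)}$ such $S$, the standard fooling-set bound for NFAs yields $2^{\Omega(2^k)}$ states, and therefore a regular monitor of length $2^{\Omega(2^k)}$, which is what is claimed.

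\emph{Main obstacle.} The delicate point is engineering the pairs $(x_S,y_S)$ so that two competing requirements hold simultaneously: on the one hand the whole language $L_U^k$ must remain recognisable by a parallel monitor of length $O(k^2)$ — which is why membership in $W$ is checked by the $perm$/$find$ machinery and the order is verified pairwise by $smaller$, each only polynomial in $k$ — and on the other hand an NFA must be genuinely forced to retain the \emph{whole} combinatorial datum distinguishing the $x_S$ rather than a single nondeterministically guessed part of it; this is precisely the step that upgrades the $2^{\Omega(k)}$-type separation available from $L_A^k$ to the $2^{\Omega(2^k)}$ separation here. Proving the irreparability of the bad pairs — which is what makes the argument go through for $\omega$-verdict equivalence via \Cref{lem:lang2-fin-inf-trc} rather than only for verdict equivalence — and confirming that the $W$-encoding keeps all the constructed monitors polynomial in $k$ are the remaining careful-but-mechanical checks.
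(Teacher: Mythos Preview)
Your treatment of the correctness and size of $\mV_U$ is adequate and aligns with the paper's sketch.

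The lower-bound argument, however, has a genuine gap. You reduce to an NFA lower bound via the $O(n)$ translation from regular monitors to NFAs, and then aim to exhibit a fooling set of size $2^{\Omega(2^k)}$. This cannot succeed: $L_U^k$ is recognised by an automaton with only $2^{O(k)}$ states. After a block $w_i\in W$ has been read, the only information the automaton needs to carry forward is the $k$-bit string that $w_i$ encodes (at most $2^k$ values); while reading the current block it additionally tracks which of the $k$ positions have appeared and with which bit (at most $3^k$ possibilities). The ``strictly increasing'' constraint is purely local --- it compares only \emph{consecutive} encoded strings --- so the automaton never needs to remember the whole set $S$ of strings seen so far. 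Consequently no NFA fooling set of the size you claim can exist, and your route delivers at best a $2^{\Omega(k)}$ bound on regular monitors, one exponential short of what is required.

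The paper's proof exploits precisely this succinctness gap between NFAs and regular monitors and does not pass through automata at all. It uses the \emph{simple trace} machinery of \cite{determinization}: a trace is simple for $\mV$ when it admits a derivation that never invokes \rtit{RecB} or \rtit{Ver}, and any set of simple traces for $\mV$ has cardinality at most $l(\mV)$ (\Cref{lem:simple_traces}). The paper then shows, via the monitor pumping lemma (\Cref{lem:pump_not_simple}), that for every strictly increasing sequence $a_1<\cdots<a_c$ from $\{0,1\}^k$ the trace $\#\,enc(a_1)\,\#\cdots\#\,enc(a_c)$ is simple for any regular monitor recognising $L_U^k\cdot\Sigma_\$^\omega$: otherwise some nonempty infix $\ftV_2$ can be pumped, and a case analysis on $|\ftV_2|$ shows that $\ftV_1\ftV_2^2\ftV_3\,\$$ or $\ftV_1\ftV_2^3\ftV_3\,\$$ lies outside $L_U^k$ (either block alignment with $W$ is destroyed, or some encoded value repeats), contradicting acceptance via \Cref{lem:lang2-fin-inf-trc}. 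Since there are $2^{2^k}-1$ such increasing sequences, $l(\mV)\geq 2^{2^k}-1$.
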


\section{Logical Consequences}
\label{sec:logical-consequences}

We now turn our attention back from the two monitoring systems to the corresponding logical fragments. We observe that the bounds that we have proved in the previous sections also apply when we discuss formula translations.
A version of \Cref{thm:samelogics} was proven in \cite{AcetoAFIL19}, but without complexity bounds.

\begin{theorem}\label{thm:samelogics}
	For every $\hV \in \ltmuC$ (\resp $\hV \in \ltmuS$), there is some $\hVV \in \CHML$ (\resp $\hVV \in \SHML$), such that $l(\hVV) = 2^{O(l(\mV) \cdot 2^{l(\mV)})}$ and  $\hSemL{\hV}=\hSemL{\hVV}$.
\end{theorem}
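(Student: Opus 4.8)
The plan is to chain together results already established in the excerpt: the monitorability/maximality correspondence of \Cref{thm:monitorability-maximality}, the parallel-to-regular monitor transformation of \Cref{prop:extended-mon-to-reg-mon}, and the synthesis direction that recovers a logical formula from a regular monitor. Concretely, given $\hV \in \ltmuC$ (the least-fixed-point fragment), part~1 of \Cref{thm:monitorability-maximality} yields a reactive parallel monitor $\mV$ with $l(\mV) = O(l(\hV))$ and $L_a(\mV)\cdot\Act^\omega = \hSemL{\hV}$. By \Cref{prop:extended-mon-to-reg-mon}, $\mV$ (being reactive and closed) has a verdict equivalent regular monitor $\mVV_R$ with $l(\mVV_R) = 2^{O(l(\mV)\cdot 2^{l(\mV)})} = 2^{O(l(\hV)\cdot 2^{l(\hV)})}$; verdict equivalence gives $L_a(\mVV_R) = L_a(\mV)$, hence $L_a(\mVV_R)\cdot\Act^\omega = \hSemL{\hV}$ as well. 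Finally, part~4 of \Cref{thm:monitorability-maximality} applied to the regular monitor $\mVV_R$ produces a formula $\hVV\in\CHML$ with $l(\hVV) = O(l(\mVV_R)) = 2^{O(l(\hV)\cdot 2^{l(\hV)})}$ and $\hSemL{\hVV} = L_a(\mVV_R)\cdot\Act^\omega = \hSemL{\hV}$. The dual chain for $\hV\in\ltmuS$ runs through the rejection language: part~1 gives $\mV$ with $L_r(\mV)\cdot\Act^\omega = \Act^\omega\setminus\hSemL{\hV}$, \Cref{prop:extended-mon-to-reg-mon} preserves $L_r$ up to verdict equivalence, and part~4 yields $\hVV\in\SHML$ with $\Act^\omega\setminus\hSemL{\hVV} = L_r(\mVV_R)\cdot\Act^\omega = \Act^\omega\setminus\hSemL{\hV}$, i.e.\ $\hSemL{\hVV} = \hSemL{\hV}$.

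The first steps I would carry out are: (i) invoke \Cref{thm:monitorability-maximality}(1) to go from the formula to a small reactive parallel monitor, being careful to pick the acceptance branch for $\ltmuC$ and the rejection branch for $\ltmuS$; (ii) invoke \Cref{prop:extended-mon-to-reg-mon} to regularize the monitor at the stated double-exponential cost, and record that verdict equivalence preserves the relevant $L_a$ or $L_r$; (iii) invoke \Cref{thm:monitorability-maximality}(4) on the regular monitor to extract a $\CHML$ or $\SHML$ formula of linearly comparable size, whose semantics matches the $\omega$-extension of the monitor's language, which by the earlier chain equals $\hSemL{\hV}$. Then I would collect the size bounds: each use of \Cref{thm:monitorability-maximality} is linear, so the only exponential blowup is the single application of \Cref{prop:extended-mon-to-reg-mon}, giving the claimed $2^{O(l(\hV)\cdot 2^{l(\hV)})}$; I should also flag the evident typo in the statement where $l(\mV)$ appears in place of $l(\hV)$.

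The one genuine subtlety — the main obstacle, such as it is — is bookkeeping with the equivalence notions: \Cref{thm:monitorability-maximality} only guarantees equality of the $\omega$-closed languages ($L_a(\mV)\cdot\Act^\omega$, not $L_a(\mV)$ itself), while \Cref{prop:extended-mon-to-reg-mon} is stated with respect to the stricter verdict equivalence. I would note that this mismatch is harmless: verdict equivalence ($L_a(\mV)=L_a(\mVV_R)$) trivially implies equality of the $\omega$-closures, so the chain composes, and there is no need for the coarser $\mveqinf$ here. A second minor point is ensuring the monitor produced by \Cref{thm:monitorability-maximality}(1) is closed so that \Cref{prop:extended-mon-to-reg-mon} applies; the synthesis procedure produces closed monitors from closed formulae, and \recHML formulae are closed by the convention fixed in \Cref{sec:preliminaries}. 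With these two observations in place the proof is a routine composition, which is why the theorem can be stated without a long argument.

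\begin{proof}
Let $\hV\in\ltmuC$. By \Cref{thm:monitorability-maximality}(1), there is a reactive (closed) parallel monitor $\mV$ with $l(\mV) = O(l(\hV))$ and $L_a(\mV)\cdot\Act^\omega = \hSemL{\hV}$. By \Cref{prop:extended-mon-to-reg-mon}, there is a verdict equivalent regular monitor $\mVV_R$ with $l(\mVV_R) = 2^{O(l(\mV)\cdot 2^{l(\mV)})} = 2^{O(l(\hV)\cdot 2^{l(\hV)})}$; in particular $L_a(\mVV_R) = L_a(\mV)$, so $L_a(\mVV_R)\cdot\Act^\omega = \hSemL{\hV}$. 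By \Cref{thm:monitorability-maximality}(4) applied to $\mVV_R$, there is $\hVV\in\CHML$ with $l(\hVV) = O(l(\mVV_R)) = 2^{O(l(\hV)\cdot 2^{l(\hV)})}$ and $\hSemL{\hVV} = L_a(\mVV_R)\cdot\Act^\omega = \hSemL{\hV}$.

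Now let $\hV\in\ltmuS$. By \Cref{thm:monitorability-maximality}(1), there is a reactive closed parallel monitor $\mV$ with $l(\mV) = O(l(\hV))$ and $L_r(\mV)\cdot\Act^\omega = \Act^\omega\setminus\hSemL{\hV}$. By \Cref{prop:extended-mon-to-reg-mon}, there is a verdict equivalent regular monitor $\mVV_R$ with $l(\mVV_R) = 2^{O(l(\hV)\cdot 2^{l(\hV)})}$ and $L_r(\mVV_R) = L_r(\mV)$, whence $L_r(\mVV_R)\cdot\Act^\omega = \Act^\omega\setminus\hSemL{\hV}$. By \Cref{thm:monitorability-maximality}(4) applied to $\mVV_R$, there is $\hVV\in\SHML$ with $l(\hVV) = O(l(\mVV_R)) = 2^{O(l(\hV)\cdot 2^{l(\hV)})}$ and $\Act^\omega\setminus\hSemL{\hVV} = L_r(\mVV_R)\cdot\Act^\omega = \Act^\omega\setminus\hSemL{\hV}$, i.e.\ $\hSemL{\hVV} = \hSemL{\hV}$.
\qed
\end{proof}
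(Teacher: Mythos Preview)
Your proposal is correct and follows essentially the same route as the paper's own proof: synthesize a small reactive parallel monitor from $\hV$ via \Cref{thm:monitorability-maximality}(1), regularize it at double-exponential cost via \Cref{prop:extended-mon-to-reg-mon}, and then read off a \CHML\ (resp.\ \SHML) formula via \Cref{thm:monitorability-maximality}(4). The paper only spells out the $\ltmuC$ case and declares the other ``similar,'' whereas you write out both and flag the $l(\mV)$/$l(\hV)$ typo; otherwise the arguments coincide.
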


\begin{proof}
	We prove the case for $\hV \in \ltmuC$, as the case for $\hV \in \ltmuS$ is similar.
	By \Cref{thm:monitorability-maximality}, we know that there is a reactive parallel monitor $\mV$, such that $L_a(\mV) \cdot \Act^\omega = \hSemL{\hV}$ and $l(\mV) = O(l(\hV))$.
	By \Cref{prop:extended-mon-to-reg-mon}, we know that there is a regular monitor $\mVV$, such that $L_a(\mVV) = L_a(\mV)$ and $l(n) = 2^{O(l(\mV) \cdot 2^{l(\mV)})}$. We can then see that $l(\mVV) = 2^{O(l(\hV) \cdot 2^{l(\hV)})}$.
	According to \Cref{thm:monitorability-maximality}, there is a formula $\hVV \in \CHML$, such that $\hSemL{\hVV} = L_a(\mVV) \cdot \Act^\omega = L_a(\mV) \cdot \Act^\omega = \hSemL{\hV}$, and 
	$l(\hVV) = O(l(\mVV))$, yielding that $l(\hVV) = 2^{O(l(\hV) \cdot 2^{l(\hV)})}$.
	\qed 
\end{proof}

The cost of the construction in the proof of \Cref{thm:samelogics} is due to the regularization of the monitor. 
Our lower bounds --- and specifically 
\Cref{prp:regular-bad} 
--- demonstrate that this construction is optimal, because a better construction of $\hVV$ from $\hV$ would lower the cost of regularization via the synthesis functions.

\begin{theorem}\label{thm:logic-lower-bounds}
	There is some $\hV \in \maxHML$, such that for every $\hVV \in \SHML$, if $\hSemL{\hV}=\hSemL{\hVV}$, then $l(\hVV) = 2^{\Omega\left(2^{\sqrt{l(\hV)}}\right)}$.
\end{theorem}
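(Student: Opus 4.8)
The goal is to transfer the lower bound of \Cref{prp:regular-bad} from regular monitors to $\SHML$ formulae, using the synthesis correspondence of \Cref{thm:monitorability-maximality}. The key observation is that the cost in \Cref{thm:samelogics} of passing from $\ltmuS$ (equivalently $\maxHML$) to $\SHML$ is exactly the cost of regularizing a parallel monitor, and since $\mV_U$ of \Cref{prp:regular-bad} is a small parallel monitor witnessing that regularization is expensive, the corresponding formula must be a small $\maxHML$ formula witnessing that $\SHML$-translation is expensive. The only genuinely new wrinkle is the reparametrization: \Cref{prp:regular-bad} gives, for each $k$, a parallel monitor of length $O(k^2)$ whose smallest equivalent regular monitor has length $2^{\Omega(2^k)}$, and we must rephrase this in terms of the resulting formula's length $n = l(\hV)$, which will be $O(k^2)$, so that $2^k = 2^{\Theta(\sqrt{n})}$ and the bound reads $2^{\Omega(2^{\sqrt{n}})}$.

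\begin{proof}
Fix $k > 0$. By \Cref{prp:regular-bad}, $\mV_U$ (call it $\mV$ here) is a reactive parallel monitor with $l(\mV) = O(k^2)$ that recognizes $L_U^k \cdot \Sigma_\$^\omega$ positively. By \Cref{thm:monitorability-maximality}(2), there is a formula $\hV \in \ltmuC$ with $l(\hV) = O(l(\mV)) = O(k^2)$ and $\hSemL{\hV} = L_a(\mV) \cdot \Act^\omega = L_U^k \cdot \Sigma_\$^\omega$. (Here $\Act = \{0,1,\#,\$,\ldots\}$ is the relevant finite alphabet, as used in \Cref{sec:lower}.) Taking the dual negation-normal form, or equivalently by applying the analogous statement on the safety side, one obtains a formula $\hV \in \maxHML$ of length $O(k^2)$ with the analogous semantic characterization; we work with whichever side matches the statement. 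Now suppose $\hVV \in \SHML$ satisfies $\hSemL{\hVV} = \hSemL{\hV}$. By \Cref{thm:monitorability-maximality}(3), there is a regular monitor $\mVV$ with $l(\mVV) = O(l(\hVV))$ such that $L_r(\mVV) \cdot \Act^\omega = \Act^\omega \setminus \hSemL{\hVV}$ (and symmetrically for the co-safety/acceptance side). Thus $\mVV$ is a regular monitor recognizing the same $\omega$-language as $\mV$, i.e.\ recognizing $L_U^k \cdot \Sigma_\$^\omega$ up to $\omega$-verdict equivalence. By \Cref{prp:regular-bad}, every such regular monitor has length $2^{\Omega(2^k)}$, so $l(\hVV) = \Omega(l(\mVV)) = 2^{\Omega(2^k)}$.

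It remains to convert this into a bound in terms of $n := l(\hV)$. Since $l(\hV) = O(k^2)$, we may choose, for each sufficiently large $n$, a value of $k = \Theta(\sqrt{n})$ for which the above construction produces a formula $\hV$ with $l(\hV) \le n$; concretely, $k = \lfloor c\sqrt{n} \rfloor$ for a suitable constant $c > 0$. Then $2^k = 2^{\Theta(\sqrt{n})}$, and hence $l(\hVV) = 2^{\Omega(2^k)} = 2^{\Omega(2^{\sqrt{l(\hV)}})}$, as claimed.
\qed
\end{proof}

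\noindent The main obstacle is bookkeeping rather than a new idea: one must check that the synthesis functions of \Cref{thm:monitorability-maximality} preserve the appropriate notion of equivalence ($\omega$-verdict equivalence / equality of trace sets) in both directions, so that a hypothetical small $\SHML$ formula really does yield a small regular monitor recognizing $L_U^k \cdot \Sigma_\$^\omega$ in the sense forbidden by \Cref{prp:regular-bad}; and one must handle the safety-vs-co-safety bookkeeping (the statement is phrased for $\maxHML \to \SHML$, while $L_U^k$ is naturally a positively-recognized, hence co-safety, language, so one either dualizes the language and monitor or uses the rejection side of $\mV_U$ — \Cref{prp:regular-bad} is stated symmetrically for recognition, so this is routine). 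The square-root in the exponent is an artifact of $l(\mV_U) = O(k^2)$ being quadratic in $k$; no attempt to improve it is needed.
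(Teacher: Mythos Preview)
Your proposal is correct and follows essentially the same route as the paper's proof: both arguments push $\mV_U$ through \Cref{thm:monitorability-maximality} to obtain a short $\maxHML$ formula, then pull any hypothetical short $\SHML$ equivalent back to a short regular monitor, contradicting \Cref{prp:regular-bad}. You are in fact somewhat more explicit than the paper about two points the paper glosses over: the reparametrization $k \mapsto \sqrt{l(\hV)}$, and the safety/co-safety bookkeeping (the paper's appendix proof silently conflates the acceptance and rejection sides, whereas you flag that a dualization---swapping $\yes$/$\no$ in $\mV_U$, or equivalently complementing the language---is needed to land in $\maxHML$ rather than $\minHML$, and that \Cref{prp:regular-bad} transfers under this swap since it preserves regularity and length).
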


\begin{proof}[Sketch]
	Otherwise, we could regularize 
	$\mV_U$ from \Cref{sec:lower} more efficiently than \Cref{prp:regular-bad} allows, by first turning $\mV_U$ to $\hV \in \maxHML$, then to $\hVV \in \SHML$, and finally to a regular monitor $\mV$.
The full proof is in the appendix.
	\qed 
\end{proof}

\begin{remark}
	We observe that 
	to prove \Cref{thm:logic-lower-bounds}, it was necessary to prove \Cref{prp:regular-bad} for regular monitors that are \emph{$\omega$-verdict equivalent}, and not just verdict equivalent, to $\mV_U$.
	The reason is that, in the proof of \Cref{thm:logic-lower-bounds}, the monitor $\mV$ that monitors for $\hVV$ is $\omega$-verdict equivalent to $\mV_U$ and there is no guarantee that it is, in fact, verdict equivalent to $\mV_U$.
	\qedd
\end{remark}

\begin{remark}
	In \cite{determinization}, the authors define a deterministic fragment of \SHML, which they then show to be equivalent to the full \SHML. 
	We can claim analogous bounds for translating formulae into this smaller fragment, using similar arguments to those used above. We omit a full exposition of this claim.
	\qedd 
\end{remark}

\section{Conclusion}
\label{sec:conclusions}

We determined the cost of turning a parallel monitor into an equivalent regular, or deterministic, monitor.
As a result, we saw that, over infinite traces, \maxHML is doubly-exponentially more succinct than \SHML.

Regular monitors were introduced in \cite{FraAI:17:FMSD} to monitor for \SHML over processes.
 The cost of determinization of regular monitors was examined in \cite{AceAFI:2017:CIAA,determinization}.
 Aceto \etal in  \cite{aceto_et_al:LIPIcs:2018:9572} used a similar determinization process on formulae in the context of enforcement. 


In \cite{AcetoAFIL19}, we also synthesized \emph{tight} monitors, which are monitors that reach a verdict as soon as they have analysed enough information from the trace, and not later.
It is often important to reach a verdict as soon as possible, but it is also important to avoid  burdening a monitored system with a very large monitor.
Therefore, 
it would also be of interest to  determine  how much it costs to turn a parallel or regular monitor into a verdict-equivalent tight monitor.
This is a topic that we leave for future work.
%


\bibliographystyle{plain}
\bibliography{refs2}

\newpage
\appendix
\section*{Appendix}

This appendix gathers the omitted proofs from the main text of this paper.

\subsection*{Omitted Proofs from \Cref{sec:transformations}}

  \begin{proof}[of \Cref{cor:parallel2NFA}]
	The alternating automaton that is constructed in the proof of \Cref{prop:monitor2automaton} has at most as many states as there are submonitors in $\mV$ which, in turn, are no more than $l(\mV)$.
	To conclude, every alternating automaton with $k$ states can be converted into an NFA with at most $2^k$ states that recognises the same language~\cite{alternation,FJY90}.
	\qed  \end{proof}

The following technical lemmata help in proving the improved upper bound in \Cref{prop:better_determinization}.

\begin{lemma}\label{lem:one-trace-twostuff}
	Let $\mV$ be a regular monitor and $\mVV$ a submonitor of $\mV$, such that $\mVV$ is a sum of $\acta.\yes$ and of $\actb.\yes$. Then, there is some $\ftr \in \Act^*$, such that $\mV$ accepts both $\ftr \acta$ and $\ftr \actb$.
\end{lemma}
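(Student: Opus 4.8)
The plan is to prove, by structural induction on $\mV$, the reformulation: \emph{for every regular monitor $\mV$ having a submonitor $\mVV$ that is a sum with both $\acta.\yes$ and $\actb.\yes$ among its summands, there exists $\ftr \in \Act^*$ with $\mV \wtraS{\ftr\acta} \yes$ and $\mV \wtraS{\ftr\actb} \yes$}. By \Cref{def:acc-n-rej} this is exactly the claim that $\mV$ accepts $\ftr\acta$ and $\ftr\actb$.

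The one recurring observation is that any sum with $\acta.\yes$ among its summands can take an $\acta$-step to $\yes$: from rule \rtit{Act} we get $\acta.\yes \traS{\acta} \yes$, and since $+$ is commutative and associative with respect to the transitions a sum exhibits (equivalently, by repeated use of \rtit{SelL}/\rtit{SelR}), the whole sum inherits that step; symmetrically for $\actb$. Hence whenever $\mV$ is itself such a sum --- in particular in the base case $\mVV = \mV$ --- the empty trace $\ftr = \varepsilon$ already works.

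For the inductive step I may assume $\mVV$ is a \emph{proper} submonitor of $\mV$. If $\mV$ is a verdict or a variable there is nothing to do, as it has no proper submonitor. If $\mV = \prf{\acta'}{\mV'}$, then $\mVV$ is a submonitor of $\mV'$; the induction hypothesis gives $\ftr'$ with $\mV' \wtraS{\ftr'\acta} \yes$ and $\mV' \wtraS{\ftr'\actb} \yes$, and composing with $\mV \traS{\acta'} \mV'$ (rule \rtit{Act}) lets me take $\ftr = \acta'\ftr'$. If $\mV = \rec{x}{\mV'}$, likewise $\mVV$ is a submonitor of $\mV'$, the induction hypothesis gives the same $\ftr'$, and composing with $\mV \traS{\tau} \mV'$ (rule \rtit{RecF}, whose silent step is absorbed into the weak transition) lets me take $\ftr = \ftr'$. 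Finally, if $\mV = \mV_1 + \mV_2$, then $\mVV$ is a submonitor of $\mV_1$ or of $\mV_2$, say of $\mV_1$; if $\mVV = \mV_1$, then $\mV$ is itself a sum with $\acta.\yes$ and $\actb.\yes$ among its summands and the observation above applies with $\ftr = \varepsilon$; otherwise $\mVV$ is a proper submonitor of $\mV_1$, and the induction hypothesis yields $\ftr'$ with $\mV_1 \wtraS{\ftr'\acta} \yes$ and $\mV_1 \wtraS{\ftr'\actb} \yes$ --- since each of these weak transitions is nonempty, its first step is a transition of $\mV_1$, hence by \rtit{SelL} also available to $\mV_1 + \mV_2$, after which the remainder of the run proceeds from the same state; so $\mV \wtraS{\ftr'\acta} \yes$ and $\mV \wtraS{\ftr'\actb} \yes$, and $\ftr = \ftr'$ works.

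Two bookkeeping points deserve care but are routine. First, the argument is insensitive to whether $\mV$ is open or closed: variables carry no submonitors, so $\mVV$ is never strictly below a variable occurrence, and the descent to $\mVV$ (together with the two steps out of $\mVV$ labelled $\acta$ and $\actb$) never needs rule \rtit{RecB} or the map $p_x$. Second, the appeals to the $+$-conventions and to \rtit{SelL}/\rtit{SelR} are precisely those licensed by the paper's stipulation that $+$ is commutative and associative with respect to the transitions a monitor exhibits. I expect the case $\mV = \mV_1 + \mV_2$ --- more precisely, the claim that $\mV_1 + \mV_2$ inherits every nonempty weak transition of $\mV_1$ --- to be the spot that needs the most care, though it is entirely mechanical.
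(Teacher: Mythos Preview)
Your proposal is correct and follows exactly the approach the paper indicates: the paper's proof is the single line ``By straightforward induction on the construction of $\mV$ from $\mVV$'', and you have spelled out that induction case by case. The only minor redundancy is your special treatment of $\mVV = \mV_1$ in the sum case---since $\mVV$ is then still a submonitor of $\mV_1$, the inductive hypothesis on $\mV_1$ already covers it---but the separate handling is harmless.
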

\begin{proof}
	By straightforward induction on the construction of $\mV$ from $\mVV$.
\qed  \end{proof}

\begin{remark}
	Monitor $\mV_\actt$ from \Cref{rem:infinitestate} demonstrates that \Cref{lem:one-trace-twostuff} does not hold for parallel monitors, and so does
	$\no \paralC ( a.\yes + b.\yes )$. 
\qedd 
\end{remark}

\begin{lemma}\label{lem:immediate-yes}
	Let $\mV$ be a 
	closed regular monitor that does not have any submonitor of the form $\rec x \yes$, and let $\mV'$ be a submonitor of $\mV$.
	If $\mV' \wtraS{} \yes$, then $\mV' = \yes$.
\end{lemma}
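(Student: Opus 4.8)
The statement to prove is \Cref{lem:immediate-yes}: if $\mV$ is a closed regular monitor with no submonitor of the form $\rec x \yes$, and $\mV'$ is a submonitor of $\mV$ with $\mV' \wtraS{} \yes$ (i.e.\ $\mV' (\traS{\tau})^* \yes$, a silent path to acceptance), then already $\mV' = \yes$. The plan is to induct on the length of the $\tau$-path $\mV' \wreduc \yes$, using \Cref{lem:subs2subs} to stay within submonitors of $\mV$ throughout, and a case analysis on the top-level syntactic form of $\mV'$ to see which constructs can even emit a $\tau$-transition.

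\textbf{Key steps.} First I would recall that the only rules producing $\tau$-transitions are \rtit{RecF} ($\rec x \mVV \traS{\tau} \mVV$), \rtit{RecB} ($x \traS{\tau} p_x$), \rtit{SelL}/\rtit{SelR} (if the selected summand can do a $\tau$), and the parallel rules --- but since $\mV$ is regular, there are no parallel operators, so only \rtit{RecF}, \rtit{RecB}, \rtit{SelL}, \rtit{SelR} apply. The base case ($\mV'=\yes$ already) is immediate. For the inductive step, suppose $\mV' \traS{\tau} \mV'' \wreduc \yes$ with a shorter path. By \Cref{lem:subs2subs}, $\mV''$ is again a submonitor of $\mV$, so the induction hypothesis gives $\mV'' = \yes$; we must rule out that a non-$\yes$ submonitor $\mV'$ can do a $\tau$-step landing on $\yes$. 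If $\mV' = \rec x \mVV$ then \rtit{RecF} forces $\mV'' = \mVV$, so $\mVV = \yes$, contradicting the hypothesis that $\mV$ has no submonitor $\rec x \yes$. If $\mV' = x$ then \rtit{RecB} gives $\mV'' = p_x = \rec x \mVV$ for the corresponding defining occurrence; but $\rec x \mVV = \yes$ is impossible syntactically ($\yes$ is not of the form $\rec x \mVV$), so this sub-branch dies --- actually one iterates: $p_x$ itself is a $\rec$-monitor and by the previous case $p_x \wreduc \yes$ would need its body to be $\yes$, again excluded. If $\mV' = \mVV_1 + \mVV_2$, rules \rtit{SelL}/\rtit{SelR} give $\mV'' = \mVV_i'$ where $\mVV_i \traS{\tau} \mVV_i'$; but here $\mV'' = \yes$, and since $\mVV_i$ is a submonitor of $\mV$ with $\mVV_i \traS{\tau} \yes \subseteq \wreduc\,\yes$, the induction hypothesis applied to $\mVV_i$ (shorter, indeed length-one path) forces $\mVV_i = \yes$ --- contradiction, because then $\mV'$ contains a summand $\yes$, and we need to check that this too is excluded (a sum with a $\yes$ summand would let $\mV'$ itself reach $\yes$ in one step, and one can close the induction by observing such a configuration cannot arise, or simply by strengthening the statement). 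The remaining cases $\mV' \in \{\yes, \no, \stp, \prf{\acta}{\mVV}, \hTru\text{-style leaves}\}$ emit no $\tau$-transition at all, so the hypothesis $\mV' \traS{\tau} \mV''$ is vacuous there.

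\textbf{Main obstacle.} The delicate point is the interaction of \rtit{RecB} with sums and the need to carry the induction cleanly: a variable $x$ unfolds to $p_x = \rec x \mVV$, which is not $\yes$, so a single $\tau$-step from $x$ can never reach $\yes$, but a \emph{two}-step path $x \traS{\tau} \rec x \mVV \traS{\tau} \mVV \wreduc \yes$ would need $\mVV = \yes$, again excluded by hypothesis --- so the induction must be set up so that the hypothesis ``$\mV$ has no submonitor $\rec x\,\yes$'' is used at exactly the $\rec$-unfolding step, and the sum case must be handled so that a summand equal to $\yes$ is shown not to occur (or the lemma statement is applied inductively to the summand). I expect the cleanest route is to phrase the induction on path length and, in the sum case, note that a summand $\mVV_i$ with $\mVV_i \wreduc \yes$ via a strictly shorter path is itself a submonitor, so IH gives $\mVV_i = \yes$; to finish, I would either add to the lemma's scope the observation that under the stated hypotheses no submonitor that is a sum has a $\yes$ summand, or simply conclude directly since $\mV' \traS{\tau}\yes$ with the chosen summand being $\yes$ still yields $\mV' \neq \yes$ only spuriously --- careful bookkeeping here is the crux, but no deep idea beyond the $\tau$-rule inventory and \Cref{lem:subs2subs} is needed.
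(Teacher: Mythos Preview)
Your approach is the paper's: induction on the number $k$ of $\tau$-steps, using \Cref{lem:subs2subs} to keep every intermediate state a submonitor of $\mV$. The paper phrases the claim contrapositively (if $k>0$ then $\rec x\,\yes$ occurs in $\mV$) and handles the base $k=1$ by noting that any $\mV'$ with $\mV'\traS{\tau}\yes$ is a sum of $\rec x\,\yes$.

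Your sum case has two local slips worth fixing. First, when $k=1$ the summand $\mVV_i$ with $\mVV_i\traS{\tau}\yes$ has a path of length $1$, not strictly less than $k$, so the path-length IH does not apply; a structural sub-induction on $\mV'$ is needed here (this is precisely what the paper's ``sum of $\rec x\,\yes$'' observation encapsulates, since summands of $\mV'$ are strictly smaller submonitors). Second, even if you obtain $\mVV_i=\yes$, the contradiction is \emph{not} that $\mV'$ has $\yes$ as a summand --- nothing in the hypotheses rules that out (consider $\yes + a.\no$) --- but rather that $\yes$ has no $\tau$-transition (rule \rtit{Ver} fires only on visible actions), which is incompatible with $\mVV_i\traS{\tau}\yes$. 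With these two repairs your argument and the paper's coincide.
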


\begin{proof}
	If $\mV' \wtraS{} \yes$, then $\mV' \traS{\tau}^k \yes$ for some $k \geq 0$.
	We prove that if $k>0$, then $\rec x \yes$ is a submonitor of $\mV$.
	If $k=1$, then $\mV'$ is a sum of $\rec x \yes$, 
	and so $\rec x \yes$ is a submonitor of $\mV'$, and therefore of $\mV$.
	If $\mV' \traS{\tau} \mVV$ and  $\mVV \traS{\tau}^{k-1} \yes$, then by \Cref{lem:subs2subs},
	$\mVV$ is a submonitor of $\mV$, so
	$\rec x \yes$ is a submonitor of $\mVV$ by induction, and therefore of $\mV$.
\qed  \end{proof}

\begin{proof}[of \Cref{prop:better_determinization}]
	We recall that, in the proof of  \Cref{prop:determinization} given in \cite{AcetoAFIL19}, the construction starts from a
	consistent reactive parallel monitor $\mV$ and turns it into two NFAs $A_a$ and $A_r$, such that $L(A_a) = L_a(\mV)$, $L(A_r) = L_r(\mV)$, and the number of states in $A_a$ and in $A_r$ is at most $2^{O(l(\mV))}$ (\Cref{cor:parallel2NFA}).
	Then, these automata can be turned into regular monitors $\mV_a$ and $\mV_r$, and we can conclude by determinizing $\mV^a_R + \mV^r_R$. 
	To improve the upper bound for the construction, we determinize starting from the NFAs.
	Let $A_a = (Q_a,\Act,\delta_a,q_0^a,F_a)$ and $A_r = (Q_r,\Act,\delta_r,q_0^r,F_r)$.
	
	We first construct a new NFA\footnote{For clarity, the construction of the NFA $N$ makes a mild use of $\varepsilon$-transitions.}, $N = (Q,\Act \cup \{ \yesac,\noac, \varepsilon \},\delta,q_0,F)$, such that:
	$Q = \{q_0,f\} \dcup Q_a \dcup Q_r $, where $\dcup$ stands for disjoint union; $F = \{f\}$; and 
	\begin{align*}
	\delta(q,\act) =  \begin{cases}
	\{ q_0^a, q_0^r \} & \text{for } q = q_0,~ \act = \varepsilon, \\
	\delta_o(q,\act)  & \text{for } q \in Q_o,~ \act \in \Act \cup \{\varepsilon \},~ o \in \{a,r\},  \\
	\{ f \}  & \text{for } q \in F_a,~ \act = \yesac,  \\
	\{ f \}  & \text{for } q \in F_r,~ \act = \noac,  \\
	\{ f \}  & \text{for } q = f,  \\
	\emptyset & \text{otherwise.}
	\end{cases}
	\end{align*}
	It is not hard to see that the set of minimal traces in $L(N)$, with respect to the prefix order, is
	$\min L(N) = L(A_a)\cdot \{\yesac\} ~\cup~ L(A_r)\cdot \{\noac\}$.
	
	From \cite[Theorem 2]{AceAFI:2017:CIAA}, 
	there is a deterministic (closed) monitor $\mVV'$, such that $L_a(\mVV') = L(N)$ and $l(\mVV') = 2^{2^{O(|Q|)}} = 2^{2^{2^{O(l(\mV))}}}$.
	We observe that there cannot be a submonitor of $\mVV'$ that is a sum both of $\yesac.\yes$ and of $\noac.\yes$ --- otherwise, by \Cref{lem:one-trace-twostuff}, $L_a(\mV)\cap L_r(\mV) \neq \emptyset$, which is a contradiction, because $\mV$ is consistent.
	We further assume that $\mVV'$ has no submonitors of the form $\rec x \yes$, $\rec x \no$, as these can be replaced by $\yes$ and $\no$, respectively, yielding an equivalent monitor.
	Let $\mVV$ be the result of replacing in $\mVV'$ all maximal sums of $\yesac.\yes$ by $\yes$ and of $\noac.\yes$ by $\no$.
	In particular,
	there are monitor variables $x_1,\ldots,x_k,y_1,\ldots,y_l$ that do not appear in $\mVV'$ and an open monitor $\mVV_o(x_1,\ldots,x_k,y_1,\ldots,y_l)$, such that $\mVV'=\mVV_o(s_1,\ldots,s_k,s'_1,\ldots,s'_l)$, where $s_1,\ldots,s_k$ are (all the occurrences of) sums of $\yesac.\yes$ in $\mVV'$, $s'_1,\ldots,s'_l$ are (all the occurrences of) sums of $\noac.\yes$ in $\mVV'$, and $\mVV = \mVV_o(\yes,\ldots,\yes)$.
	We prove that 
	$\min L_a(\mVV') = L_a(\mVV)\cdot \{\yesac\} ~\cup~ L_r(\mVV)\cdot \{\noac\}$.
	
	
	Let $w \in \min L_a(\mVV')$. Because $L_a(\mVV') = L(N)$, we know that $w \in L(N)$, and therefore $w \in L(A_a)\cdot \{\yesac\} ~\cup~ L(A_r)\cdot \{\noac\}$.
	So,
	it is either the case that $w = w' \yesac$ for $w' \in L_a(A_a)$, or that $w = w' \noac$  for $w' \in L_a(A_a)$ --- w.l.o.g. we assume the first case.
	Therefore, $\mVV' \wtraS{w'\yesac} \yes$ and since $w \in \min L_a(\mVV')$, $\mVV' \centernot{\wtraS{w'}}\yes$.
	%
	We demonstrate that $\mVV \wtraS{w'} \yes$.
	Specifically, 
	%
	\emph{we prove by induction on an arbitrary submonitor $\mVV'_o = \mVV'_o(x_1,\ldots,x_k,y_1,\ldots,y_l)$ of $\mVV_o(x_1,\ldots,x_k,y_1,\ldots,y_l)$ that if $\mVV_s=\mVV'_o(s_1,\ldots,s_k,s'_1,\ldots,s'_l) \wtraS{w'\yesac} \yes$ and $
		\mVV_s
		\centernot{\wtraS{w'}} \yes$ for some $w' \in \Act^*$, then $\mVV_y = \mVV'_o(\yes,\ldots,\yes) \wtraS{w'} \yes$.}
	\begin{description}
		\item[The base cases are:] $\mVV'_o$ is a verdict, 
		which is immediate, because the claim's assumptions cannot both hold; or $\mVV'_o = x_i$, in which case $\mVV_y = \yes$. 
		\item[If for some $\act \in \Act\cup\{\yesac,\noac \}$, $\mVV'_o = \act.\mVV''$] and 
		$\mVV_s \wtraS{w'\yesac} \yes$ and $\mVV_s \centernot{\wtraS{w'}} \yes$, then there are two cases. 
		In the first case, $\act = \yesac$. 
		Then, 
		as $w' \in \Act^*$, we have that $w' = \varepsilon$, so $\mVV''(s_1,\ldots,s_k,s'_1,\ldots,s'_l) \wtraS{} \yes$, and by \Cref{lem:immediate-yes} ($\mVV_s$ is a submonitor of $\mVV'$),
		$\mVV''(s_1,\ldots,s_k,s'_1,\ldots,s'_l) = \yes$,
		and therefore 
		$\mVV'_o = \yesac.\yes$, which contradicts our definitions.
		Otherwise, for $\act \neq \yesac$,
		$w' = \act~w''$, and $\mVV''(s_1,\ldots,s_k,s'_1,\ldots,s'_l) \wtraS{w''\yesac} \yes$ and $\mVV''(s_1,\ldots,s_k,s'_1,\ldots,s'_l) \centernot{\wtraS{w''}} \yes$. 
		By the inductive hypothesis,  $\mVV''(\yes,\ldots , \yes) \wtraS{w''} \yes$, and therefore $\mVV_y \wtraS{w'} \yes$.
		\item[If $\mVV'_o  = \mVV_1+\mVV_2$] and $\mVV_s \wtraS{w'\yesac} \yes$ and $\mVV_s 
		\centernot{\wtraS{w'}} \yes$, then 
		we know that $\mVV_s$ is not a sum of $\yesac.\yes$.
		Therefore, w.l.o.g. 
		$\mVV_1(s_1,\ldots,s_k,s'_1,\ldots,s'_l) \wtraS{w'\yesac} \yes$ and $\mVV_1(s_1,\ldots,s_k,s'_1,\ldots,s'_l) \centernot{\wtraS{w'}} \yes$,
		and we are done by the inductive hypothesis.
		%
		%
		%
		\item[The case for $\mVV'_o = \rec x \mVV''$] is more straightforward, as $\mVV'_o$ and $\mVV''$ have the same weak transitions.
	\end{description}
	
	We now prove that if $w\in L_a(\mVV)\cdot \{\yesac\} ~\cup~ L_r(\mVV)\cdot \{\noac\}$, then $w \in \min L_a(\mVV')$.
	Let $w \in L_a(\mVV)\cdot \{\yesac\}$ (the case for $w \in L_r(\mVV)\cdot \{\noac\}$ is symmetric).
	Then there is a $w' \in L_a(\mVV)$, such that $w = w'.\yesac$.
	Let $w''$ be the shortest prefix of $w'$, such that $w'' \in L_a(\mVV)$.
	We observe that if $w''.\yesac \in \min L_a(\mVV')$, then $w''.\yesac \in \min L(N)$, and therefore $w'' \in L(A_a)$, yielding that $w' \in L(A_a)$, because $A_a$ is suffix-closed (by verdict-persistence, \Cref{lem:ver-persistence}), and therefore $w = w'.\yesac \in \min L_a(\mVV')$.
	Furthermore, if $w''.\yesac \in L_a(\mVV')$, we can see that $w''.\yesac \in \min L_a(\mVV')$, as $w'' \in \Act^*$.
	Thus, it suffices to prove that  $w''.\yesac \in L_a(\mVV')$.
	%
	By induction on the derivation, and due to the minimality of $w''$, we can see that either $\mVV_o(x_1,\ldots,x_k,y_1,\ldots,y_l) \wtraS{w''} \yes$, in which case $\mVV' \wtraS{w''} \yes$, and therefore $\mVV' \wtraS{w''.\yesac} \yes$, or 
	$\mVV_o(x_1,\ldots,x_k,y_1,\ldots,y_l) \wtraS{w''} x_i$ for some $x_i$, in which case $\mVV' \wtraS{w''} s'_i$, where $s'_i$ is $s_i$ after applying a number of steps substituting variables for monitors. As $s_i$ is a sum of $\yesac.\yes$, so is $s'_i$, therefore $s'_i \traS{\yesac} \yes$.
	
	We just demonstrated that $\min L_a(\mVV') = L_a(\mVV)\cdot \{\yesac\} ~\cup~ L_r(\mVV)\cdot \{\noac\}$. But we have also seen that $L_a(\mVV') = L(N)$ and that $\min L(N) = L(A_a)\cdot \{\yesac\} ~\cup~ L(A_r)\cdot \{\noac\}$, and therefore $L_a(\mVV) = L(A_a) = L_a(\mV)$ and $L_r(\mVV) = L(A_r) = L_r(\mV)$, which is what we wanted to show.
	\qed  \end{proof}

\subsection*{Omitted Proofs from \Cref{sec:lower}}

\begin{proof}[of \Cref{lem:lang1-fin-inf-trc}]
	The ``only if'' direction is immediate, and therefore we only show the ``if'' direction. Let $\ftV$ be such that $\forall \tV.~\ftV\tV
	\in L_A^k \cdot \Sigma_\$^\omega$. 
	From our assumptions about $\ftV$, $\ftV\#^\omega
	\in L_A^k \cdot \Sigma_\$^\omega$. According to the definition of $L_A^k$, there must be a finite prefix $\ftVV$ of $\ftV\#^\omega$, such that $\ftVV \in L_A^k$ and ends with $\$ $. Since the only symbol in $\#^\omega$ is $\#$, $\ftVV$ is a prefix of $\ftV$.
	Therefore, since $L_A^k$ is suffix-closed, we conclude that $\ftV \in L_A^k$.
	\qed 
\end{proof}

\begin{proof}[of \Cref{lem:skipsharp}]
	The ``if'' direction is straightforward, using \Cref{cor:parallel-and-monitors}. For the ``only if'' direction,
	notice that the $\#.\mV$ component of the monitor is the only component that can produce a $\yes$ verdict, and it activates at the occurrences of the $\#$ symbol.
	Therefore, if $skip_\#(\mV)$ accepts $\fftV$, there must be some 
	$\ftV \# \fftVV = \fftV$, such that $\mV$ accepts $\fftVV$.
	Let $\ftV$ be minimal for this to happen.
	
	We now prove by contradiction that $\ftV \in \{0,1,\#\}^*$. 
	If that is not the case, $\$ $ appears in $\ftV$, so there are $\ftV_1\$\ftV_2 = \ftV$.
	When reading $\ftV_1\$$, all monitor components of $skip_\#(\mV)$ must fail (reach \stp), except the ones that have come from $\#.\mV$.
	Therefore, we can split $\fftV$ as $\ftV_1 \# \fftVV' = \fftV$, such that $\mV$ accepts $\fftVV'$, which is a contradiction, due to the assumed minimality of $\ftV$.
	Therefore, $\ftV \in \{0,1,\#\}^*$.
	\qed  \end{proof}

\paragraph{Hardness of $L_A^k$ for deterministic monitors}
We introduce the notion of a simple trace. As \Cref{lem:simple_traces} reveals, the cardinality of a set of simple traces gives a lower bound on the size of a regular monitor.


\begin{definition}
	We call a derivation $ \mV \wtraS{\ftV} \mV'$ simple, if rules \rtit{RecB} and \rtit{Ver} are not used in the proof of any transition of the derivation.
	We say that a trace $\ftV \in \Sigma_\$^*$ is simple for monitor $\mV$ if there is a simple  derivation $ \mV \wtraS{\ftV} \mV'$. We say that a set of simple traces for $\mV$ is simple for $\mV$.
	\exqed
\end{definition}

\begin{lemma}[Lemma 20 from \cite{determinization}]\label{lem:simple_traces}
	Let $\mV$ be a regular monitor and $G$ a (finite) simple set of traces for $\mV$. Then, $l(\mV) \geq |G|$.
	\qed 
\end{lemma}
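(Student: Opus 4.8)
The natural approach is structural induction on the regular monitor $\mV$, showing at each step that a simple set of traces for $\mV$ splits, according to the top-level operator of $\mV$, into simple sets for the immediate submonitors. Well-foundedness is immediate since $\prf{\acta}{\mV_1}$, $\rec{x}{\mV_1}$ and $\ch{\mV_1}{\mV_2}$ all have strictly smaller subterms, while verdicts and variables are leaves; note also that simplicity forbids \rtit{RecB}, so the open/closed distinction (and the maps $x\mapsto p_x$) is irrelevant here.

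For the base cases, let $\mV$ be a verdict $\vV$ or a variable $x$. The only rule that can rewrite $\vV$ is \rtit{Ver}, and the only rule that can rewrite $x$ is \rtit{RecB}; both are forbidden in a simple derivation. Hence the empty trace is the unique simple trace for $\mV$, so $|G| \leq 1 = l(\mV)$.

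For the inductive step I would first record the elementary observation that, at the root of the derivation tree of the first one-step transition in a simple derivation out of $\mV$, the applicable rule is forced by the top-level operator of $\mV$ (only \rtit{Act} for a prefix $\prf{\acta}{\mV_1}$; only \rtit{RecF} for a recursion $\rec{x}{\mV_1}$; only \rtit{SelL} or \rtit{SelR} for a choice $\ch{\mV_1}{\mV_2}$), and that deleting this first transition leaves a simple derivation out of its successor. If $\mV = \prf{\acta}{\mV_1}$, every non-empty simple trace of $\mV$ has the form $\acta\ftV_1$ with $\ftV_1$ simple for $\mV_1$, so $G_1 := \{ \ftV_1 \mid \acta\ftV_1 \in G \}$ is simple for $\mV_1$ and the induction hypothesis gives $|G| \leq 1 + |G_1| \leq 1 + l(\mV_1) \leq l(\mV)$. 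If $\mV = \rec{x}{\mV_1}$, any non-trivial simple derivation begins with $\mV \traS{\tau} \mV_1$, so $G$ is itself a simple set for $\mV_1$ and $|G| \leq l(\mV_1) \leq l(\mV)$. If $\mV = \ch{\mV_1}{\mV_2}$, each non-empty simple trace of $\mV$ is simple for $\mV_1$ or for $\mV_2$; collecting the first kind into $G_1$ and the rest into $G_2$, both simple for the respective submonitor, we obtain $G \subseteq \{\varepsilon\} \cup G_1 \cup G_2$, hence $|G| \leq 1 + |G_1| + |G_2| \leq 1 + l(\mV_1) + l(\mV_2) = l(\mV)$. In each case the single symbol that $l(\cdot)$ charges for the top-level operator exactly absorbs the empty-trace contribution, which closes the induction.

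I expect the only genuinely delicate part to be the bookkeeping in that elementary observation: one must check, rule by rule, that a simple derivation out of a prefix, a recursion, or a choice is really forced to begin with the expected rule — in particular that \rtit{Ver} and \rtit{RecB} can never fire, so verdicts and variables are dead ends — and that stripping off the first transition preserves simplicity of the remaining derivation. A secondary subtlety is that one trace may admit several simple derivations, so in the choice case one should only assert the inclusion $G \subseteq \{\varepsilon\} \cup G_1 \cup G_2$ rather than a partition; this is harmless, since overlap only weakens the right-hand side of the counting inequality. Everything past that is direct arithmetic with $l(\cdot)$.
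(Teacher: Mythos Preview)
The paper does not supply a proof: the lemma is quoted as Lemma~20 of \cite{determinization} and closed with a bare \qed, so there is nothing in the present paper to compare your argument against. Your structural induction is correct and is the standard proof one would expect. The only points that need checking are exactly the ones you already flag: that once \rtit{RecB} and \rtit{Ver} are excluded, the first transition out of a prefix, a recursion, or a choice is syntax-directed; that stripping that transition (and, in the choice case, the \rtit{SelL}/\rtit{SelR} wrapper around its premise) leaves a simple derivation from the relevant subterm; and that in the choice case the split of $G\setminus\{\varepsilon\}$ into $G_1$ and $G_2$ need only be a cover, not a partition. The arithmetic with $l(\cdot)$ then goes through because each top-level constructor contributes at least one symbol to the string length.
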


The following \Cref{lem:pump_not_simple,lem:find_px_det} are variations of the Pumping Lemma for monitors.

\begin{lemma}[Lemma 22 from \cite{determinization}]\label{lem:pump_not_simple}
	Let $\mV \wtraS{\ftV} \mVV$, such that $\mV$ is regular and $\ftV$ is not simple for $\mV$. Then, there are $\ftV = \ftV_1\ftV_2\ftV_3$, such that $|\ftV_2| >0$ and for every $i \geq 0$,  $\mV \wtraS{\ftV_1\ftV_2^i \ftV_3} \mVV$.
	\qed 
\end{lemma}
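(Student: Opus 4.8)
The plan is to argue directly about the shape of the given derivation $D : \mV \wtraS{\ftV} \mVV$. A cheap first observation: the empty trace always admits the $0$-step (hence vacuously simple) derivation $\mV \wtraS{\varepsilon} \mV$, so $\ftV$ not simple forces $\ftV \neq \varepsilon$; and, more usefully, $D$ itself cannot be simple, so \emph{some} step of $D$ is proved using rule \rtit{RecB} or rule \rtit{Ver}. I would normalise $D$ up front: whenever a state recurs along $D$ through a block consisting only of $\tau$-transitions, splice that block out; this preserves both $\ftV$ and $\mVV$, and since $\ftV$ is still not simple the result still uses \rtit{RecB} or \rtit{Ver}, so henceforth $D$ is $\tau$-loop-free. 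The structural facts I would lean on, specific to \emph{regular} monitors, are: (i) the only contextual rules are \rtit{SelL} and \rtit{SelR}, and these \emph{discard} the summand they do not enter; (ii) among all rules, only \rtit{RecB} leads from a state ($x$) to a monitor ($p_x = \rec{x}{\mV_x}$) that is not one of its own subterms — every other rule strictly descends in the syntax tree; and (iii) by the standing convention, each variable $x$ occurs only inside its unique binder $p_x$.

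If $D$ uses \rtit{Ver}, let $t$ be the first index with $\mV_t$ a verdict $\vV$. By verdict persistence (\Cref{lem:ver-persistence}) $\mVV = \vV$, and the derivation from $\mV_t$ onwards consists only of \rtit{Ver}-steps, each reading a single visible action; picking one such occurrence, of a visible action $b$, splitting $\ftV = \ftV_1\,b\,\ftV_3$ accordingly and setting $\ftV_2 = b$, the loop $\vV \traS{b} \vV$ lets us insert or delete copies of $b$, which gives $\mV \wtraS{\ftV_1\ftV_2^i\ftV_3}\mVV$ for all $i\ge 0$. Otherwise $D$ uses \rtit{RecB} but not \rtit{Ver}. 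Pick a step $j$ applying \rtit{RecB} to a variable $x$, so $\mV_j = p_x$; note $j$ is not the last step, since the prefix of $D$ preceding step $j$ is then simple and deleting step $j$ ($\tau$) would otherwise exhibit a simple derivation of $\ftV$. By (i), $\mV_{j-1}$ is a sum having the leaf $x$, hence (by (iii)) a \emph{strict} subterm of $p_x$; but $\mV_0 = \mV$ has $p_x$ as a subterm; and the simple prefix produces states forming a decreasing chain under the subterm order, so it must cross $p_x$, and by (ii) the first state it reaches strictly inside $p_x$ is $\mV_x$ itself. Since $\mV_j = p_x$ can only step to $\mV_x$ (via \rtit{RecF}), the state $\mV_x$ occurs twice along $D$; the segment between the two occurrences is a loop, and — $D$ being $\tau$-loop-free — it reads at least one visible action. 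Taking $\ftV_2$ to be its visible content and $\ftV_1,\ftV_3$ the visible content before and after, we obtain $\mV \wtraS{\ftV_1} \mV_x \wtraS{\ftV_2} \mV_x \wtraS{\ftV_3} \mVV$ with $|\ftV_2| > 0$, and iterating the loop $i$ times yields the claim.

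The step I expect to be the main obstacle is the bookkeeping around degenerate configurations: loops the derivation might traverse while reading nothing visible (dealt with once and for all by the $\tau$-loop normalisation), and non-simple behaviour appearing only at the very last transition of $\ftV$ (here one uses verdict persistence together with the precise choice of the repeated state above, and — in the subtlest case, where a verdict occurs as a bare summand — the normal-form conventions on monitors, exactly as in \cite{determinization}). One also has to argue carefully, using (i), that \rtit{RecB} on $x$ genuinely forces an earlier visit to the body $\mV_x$ rather than merely to some sum of the form $p_x + \cdots$. Since the statement is Lemma 22 of \cite{determinization} transcribed to the present \rtit{RecF}/\rtit{RecB} presentation of recursion, the real content is checking that these structural facts survive the change of rules; the pumping manipulations themselves are then routine.
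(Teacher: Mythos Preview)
The paper does not prove this lemma; it is quoted verbatim as Lemma~22 of \cite{determinization} and closed with a bare \qed, so there is no in-paper argument to compare against. Your sketch reconstructs the intended proof under the present \rtit{RecF}/\rtit{RecB} presentation, and the overall strategy---split on whether \rtit{Ver} or \rtit{RecB} is responsible for non-simplicity; in the latter case exhibit two visits to the body $\mV_x$ and pump the segment between them---is the right one.

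Two points need tightening. In the \rtit{RecB} branch you must take $j$ to be the \emph{first} \rtit{RecB} step (otherwise the prefix is not guaranteed simple and your subterm-chain argument collapses), and you silently use that $\mV$ is closed: if $x$ is already free in $\mV_0$ then $p_x$ is not a subterm of $\mV_0$ and the ``crossing $p_x$'' step fails. The applications in this paper only invoke the lemma for closed monitors, so the gap is harmless here, but it is an unstated hypothesis of your argument. Second, your caveat about the \rtit{Ver} edge case is essential rather than cosmetic: for $\mV = a.(\yes + c.\no)$ and $\ftV = ab$ with $\mVV = \yes$, the trace $ab$ is not simple for $\mV$ (the only $b$-step from $\yes + c.\no$ uses \rtit{Ver}), yet no decomposition $\ftV_1\ftV_2\ftV_3 = ab$ satisfies the $i=0$ case. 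So the lemma as literally stated is false without the normal-form convention from \cite{determinization} that bare verdicts do not occur as summands; you are right to flag this, but it would strengthen the write-up to give the counterexample rather than only gesture at conventions.
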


\begin{lemma}[Lemma 27 from \cite{determinization}]\label{lem:find_px_det}
	Let $\mV \wtraS{\ftV} \mV'$, such that $\ftV$ is not simple for $\mV$ and $\mV$ is deterministic (and regular). Then, there are $\ftV = \ftV_1\ftV_2\ftV_3$ and monitor $\mVV$, such that $|\ftV_2| >0$ and  $\mV \wtraS{\ftV_1} \mVV \wtraS{\ftV_2} \mVV \wtraS{\ftV_3} \mV'$.
\end{lemma}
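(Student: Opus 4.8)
The plan is to follow the pattern of the proof of \Cref{lem:pump_not_simple}, but to exploit the determinacy of $\mV$ in order to pin down an intermediate state $\mVV$ that is genuinely \emph{revisited}. Since $\ftV$ is not simple, the given derivation $\mV\wtraS{\ftV}\mV'$ must use rule \rtit{RecB} or rule \rtit{Ver}; I would distinguish these two cases, putting into the \rtit{RecB} case exactly the derivations that use \rtit{RecB} but not \rtit{Ver}.

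The \rtit{Ver} case is immediate: the derivation passes through some verdict $\vV$, so $\mV'=\vV$ by \Cref{lem:ver-persistence}, and the \rtit{Ver} step consumes a visible action $\acta$ at $\vV$; factoring $\ftV=\ftV_1\acta\ftV_3$ at that step gives $\mV\wtraS{\ftV_1}\vV\wtraS{\acta}\vV\wtraS{\ftV_3}\vV=\mV'$, so $\mVV:=\vV$ and $\ftV_2:=\acta$ work. For the \rtit{RecB} case, I would consider the \emph{first} use of \rtit{RecB} in the derivation, a step $x\traSS{\tau}p_x$, and write $\ftV=\ftV_a\ftV_3$ with $\mV\wtraS{\ftV_a}x\traSS{\tau}p_x\wtraS{\ftV_3}\mV'$; the prefix $\mV\wtraS{\ftV_a}x$ uses no \rtit{RecB} (it precedes the first one) and no \rtit{Ver}, hence is simple. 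The structural ingredient I need is that, since $\mV$ is closed, every derivation $\mV\wtraS{\ftr}x$ visits the state $p_x$: a monitor state in which $x$ occurs free can only be produced by applying \rtit{RecF} to $p_x=\rec{x}{\mV_x}$, which I would establish by induction on the transition rules, using that $x$ occurs free in $\mV$ only inside $\mV_x$, that each $p_z$ is closed, and that $p_x$ is the unique recursive binder of $x$; since $x$ itself is such a state, the derivation to it passes through $p_x$. Factoring the simple prefix as $\mV\wtraS{\ftV_1}p_x\wtraS{\ftV_2}x$ (so $\ftV_a=\ftV_1\ftV_2$) and setting $\mVV:=p_x$ yields $\mVV\wtraS{\ftV_2}x\traSS{\tau}p_x=\mVV$, i.e.\ $\mVV\wtraS{\ftV_2}\mVV$, together with $\mVV\wtraS{\ftV_3}\mV'$.

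The main obstacle is the side condition $|\ftV_2|>0$, i.e.\ ruling out a silent cycle $p_x\wreduc x\traSS{\tau}p_x$, and this is precisely where determinacy is used. In a deterministic regular monitor every reachable state has at most one outgoing $\tau$-transition, and any state offering a $\tau$-move (only $\rec$-monitors and monitor variables do) has no visible transition at all; hence the $\tau$-behaviour is functional, so if $p_x\wreduc x$ held then, from $p_x$ onwards, the run would be forced into an endless silent loop and could never fire a visible transition again, forcing $\ftV_3$ to be empty. But $\mV\wtraS{\ftV_1}p_x$ is a sub-derivation of the simple derivation $\mV\wtraS{\ftV_a}x$, so $\ftV_1$ is simple for $\mV$; since $\ftV$ is not simple we have $\ftV_1\neq\ftV$, hence $\ftV_3$ contains at least one visible action — a contradiction. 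Therefore $|\ftV_2|>0$. As in \Cref{lem:pump_not_simple}, the placement of the silent steps inside the weak transitions is routine bookkeeping, and the two cases above exhaust the possibilities, completing the proof.
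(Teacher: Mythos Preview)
The paper does not give its own proof of this lemma; it is imported verbatim as Lemma~27 of \cite{determinization} and stated without argument, so there is nothing in the present paper to compare your attempt against.

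On its own merits your argument is sound, with one small inaccuracy that does not affect the outcome. The claim that ``each $p_z$ is closed'' is false in general --- for instance in $\rec{x}{\rec{y}{(\prf{a}{x}+\prf{b}{y})}}$ the submonitor $p_y$ has $x$ free --- but you never actually rely on it. You apply the structural observation only to the \emph{simple} prefix $\mV\wtraS{\ftV_a}x$, in which \rtit{RecB} is not used; along such a run the only rule that can enlarge the set of free variables is \rtit{RecF}, and in a deterministic monitor \rtit{RecF} fires only at the top level (summands of any sum are prefixes $\prf{\act}{\mV_\act}$, so \rtit{SelL}/\rtit{SelR} can only use \rtit{Act} as their premise). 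Hence the first state along the run in which $x$ is free is produced by unfolding $p_x$ itself, which is exactly what you need. Your determinacy argument for $|\ftV_2|>0$ --- that in a deterministic monitor the $\tau$-relation is a partial function and $\tau$-enabled states have no visible transitions, so a silent cycle through $p_x$ would block all further visible actions --- is correct and is the crux of why determinism is required here, in contrast to \Cref{lem:pump_not_simple}.
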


Fix a partition $\{C,D\}$ of $\{0,1\}^k$, such that $|C|=|D| = 2^{k-1}$; let $K = 2^{2^{k-1}}$.
Let $P_C = C_0\cdots C_{K-2}$ be a permutation of $2^C \setminus \{\emptyset\}$, and $P_D = D_0\cdots D_{K-2}$ a permutation of $2^D \setminus \{\emptyset\}$.
For every $0\leq i<K-1$, let
	$$t_i = \# enc(w_0) \# enc(w_1) \# \cdots \# enc(w_{|C_i|}) \#,$$ where $w_0 w_1 \cdots w_{|C_i|}$ is a permutation of $C_i$
	and
	$$s_i = \# enc(w'_0) \# enc(w'_1) \# \cdots \# enc(w'_{|D_i|}) \#,$$ where $w'_0 w'_1 \cdots w'_{|D_i|}$ is a permutation of $D_i$.
	Thus, each $t_i$ and $s_i$ encodes a permutation
	of a distinct subset of $\{0,1\}^k$.
Let $$ t_K(P_C,P_D) = t_0 \$ s_0 \$ t_1 \$ s_1 \$ \cdots \$t_{K-1} \$ s_{K-1} $$
and let
	\[T = \{ t_K(P_C,P_D) \mid P_C \in (2^C \setminus \{\emptyset\})! \text{ and } P_D \in (2^D \setminus \{\emptyset\})! \}.\]
	Notice that  $|T| = ((K-1)!)^2$.

The following \Cref{lem:tK_and_prefixes} 
characterizes trace $t_K$, as defined in \Cref{sec:lower}, and is useful in the proof of \Cref{cor:tK_and_prefixes}, which follows immediately from it.

\begin{lemma}\label{lem:tK_and_prefixes}
	If $f\neq g$ are prefixes of $t_K$, then there is some $h \in \{0,1,\#,\$\}^*$, such that $fh \in L_A^k$ iff $gh \notin L_A^k$.
\end{lemma}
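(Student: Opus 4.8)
The plan is to exploit the fact that $t_K$ is a carefully concatenated list of encodings, so that two distinct prefixes $f$ and $g$ necessarily differ in which subset of $\{0,1\}^k$ has been "completed" by the most recent $\$$-delimited block, or in how much of the current block has been read. First I would observe that, by the definition of $L_A^k$, membership of a string $x$ in $L_A^k$ depends only on whether $x$ contains a substring of the form $\# w \# v \$ u' \# w' \# \$$ with $w, w' \in W$, $w \equiv w'$, and exactly one $\$$ between the two occurrences. Since $t_K(P_C,P_D) = t_0 \$ s_0 \$ t_1 \$ \cdots \$ t_{K-1} \$ s_{K-1}$ is built so that each $t_i$ (resp.\ $s_i$) is a sequence of $\#$-delimited ordered encodings $enc(w)$ of strings $w$ in a distinct subset $C_i$ (resp.\ $D_i$) of $\{0,1\}^k$, reading a prefix $f$ of $t_K$ leaves the "state" determined by: the index of the $\$$-block currently being read, and the set of strings whose encodings have already been fully seen within the current block.

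Next I would do a case analysis on how $f$ and $g$ relate as prefixes of $t_K$; without loss of generality $f$ is a proper prefix of $g$. The key is to pick a completion $h$ that "closes off" the current block after $g$ with a fresh $\$$ and then appends a new block $\# enc(w') \# \$$ for a suitable $w'$ chosen so that $g h$ contains the required matching pattern while $f h$ does not (or vice versa). Concretely, if after reading $f$ the block being processed is part of some $t_i$ associated with the subset $C_i$, and after reading $g$ either the block has advanced to encode an additional string $w$ or has moved to the next $\$$-delimited block, I choose $w'$ and $h$ so that $w'$ matches an encoding present in the $g$-completed picture but not in the $f$-completed picture. The suffix-closedness of $L_A^k$ (already noted in the text) and Lemma \ref{lem:lang1-fin-inf-trc} guarantee that only the presence or absence of the critical substring matters, so such an $h$ can always be found by appending $\#$-delimited encodings and a single $\$$. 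The two subcases --- $f$ and $g$ ending inside the same $\$$-block versus in different blocks --- are handled by the same recipe, the difference being whether the distinguishing encoding lives in the freshly appended block or is already present before the last $\$$ in one of the two strings.

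The main obstacle I anticipate is bookkeeping: one must verify carefully that the appended $h$ introduces \emph{exactly one} $\$$ between the relevant $\# w \#$ and $\# w' \# \$$, since $L_A^k$ insists on a single separating $\$$, and that no \emph{accidental} matching pattern is created in $fh$ (nor destroyed in $gh$) by some other pair of encodings occurring earlier in $t_K$. This is where the construction's design pays off: because $P_C$ and $P_D$ are permutations of subsets and the $C_i$ (resp.\ $D_i$) are all distinct, the encodings appearing before a given $\$$ form a known, controlled set, so I can choose the distinguishing string $w'$ to lie outside every "earlier" subset that could otherwise trigger a spurious match. Once $w'$ is chosen with this property, checking that $fh \in L_A^k \iff gh \notin L_A^k$ is a direct, if slightly tedious, unfolding of the definition of $L_A^k$ together with $w \equiv w'$.
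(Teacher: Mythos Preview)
Your plan matches the paper's approach: assume $f$ is a proper prefix of $g$, build $h$ around a suffix $\# enc(w') \# \$$ for a distinguishing $w'$, and argue that exactly one of $fh, gh$ lands in $L_A^k$. The paper's case split is more concrete than yours: it distinguishes (i) $g$ contains more $\$$'s than $f$, (ii) the same number of $\$$'s but more $\#$'s, and (iii) the same number of both. In case~(i) it takes $h = \# w \# \$$ with $w$ chosen from (essentially) the symmetric difference of the last \emph{completed} blocks of $f$ and of $g$; in case~(ii) it takes $h = \$ \# enc(w_j) \# \$$, where $enc(w_j)$ is an extra encoding already present in $g$'s current block but not in $f$'s; in case~(iii) it first pads to complete the partial encoding and then proceeds as in~(ii).

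Your diagnosis of the bookkeeping obstacle is slightly off, and it leads your recipe astray in one case. You write that you will ``choose the distinguishing string $w'$ to lie outside every earlier subset'', but you do not have that freedom: $w'$ is dictated by what distinguishes $f$ from $g$, and it may well encode an element that also appears in many earlier $C_i$ or $D_j$. The reason no spurious match arises is simpler and already baked into the construction: the $L_A^k$ pattern requires the matching $\#w\#$ and $\#w'\#\$$ to be separated by \emph{exactly one} $\$$, so the only dangerous comparison is between \emph{adjacent} $\$$-blocks; but adjacent blocks of $t_K$ alternate between encodings drawn from $C$ and encodings drawn from $D$, and $C \cap D = \emptyset$. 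Hence $t_K$ by itself contains no $L_A^k$ pattern at all, and the only place a match can be created or avoided is at the final $\#\$$ that $h$ contributes. This also means your fixed recipe of always ``closing off the current block after $g$ with a fresh $\$$'' before appending $\#enc(w')\#\$$ is counterproductive in case~(i): that extra $\$$ pushes $w'$ two $\$$'s away from the last completed block you want it to match against, and then neither $fh$ nor $gh$ need lie in $L_A^k$ (think of $f$ and $g$ both ending immediately after a $\$$). Dropping the leading $\$$, as the paper does in case~(i), and matching $w'$ against the last \emph{completed} block rather than the current one, is what makes the argument go through.
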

\begin{proof}
	We can assume that $f$ is a proper prefix of $g$. We have the following cases:
	\begin{description}
		\item[The symbol $\$ $ appears more times in $g$ than in $f$.]
		Then, for some $f',g' \in \Sigma^*$ and $0 \leq i \leq j$:
		\begin{itemize}
			\item
			$f = f'$ and ($g =  t_0 \$ s_0 \$ \cdots \$ s_j \$ g' $ or $g =  t_0 \$ s_0 \$ \cdots \$ t_j \$ g' $); or
			\item
			$f =  t_0 \$ s_0 \$ \cdots \$ t_i \$ f' $
			and
			$g =  t_0 \$ s_0 \$ \cdots \$ s_j \$ g' $;
			or
			\item
			$f = t_0 \$ s_0 \$ \cdots \$ s_{i} \$ f' $
			and
			$g =  t_0 \$ s_0 \$ \cdots \$ t_j \$ g' $;
			or
			\item
			$f =  t_0 \$ s_0 \$ \cdots \$ t_i \$ f' $
			and
			$g =  t_0 \$ s_0 \$ \cdots \$ t_j \$ g' $ and $i<j$;
			or
			\item
			$f =  t_0 \$ s_0 \$ \cdots \$ s_i \$ f' $
			and
			$g =  t_0 \$ s_0 \$ \cdots \$ s_j \$ g' $ and $i<j$.
		\end{itemize}
		Then, for all these cases we can immediately see that there is some $w \in W$ that appears in one of $t_i,s_i,t_j, s_j$, such that for $h = \# w \# \$ $, $fh \in L_A^k$ iff $gh \notin L_A^k$.
		
		\item[The symbol $\# $ appears more times in $g$ than in $f$, but $\$ $ does not.]
		Then, for some $f',g' \in \{0,1\}^*$,
		\begin{align*}
		f &=  t' \$ \# enc(w_0) \# enc(w_1) \# \cdots \# enc(w_{i}) \# f' 
		~~~\text{ and}\\
		g &=  t' \$ \# enc(w_0) \# enc(w_1) \# \cdots \# enc(w_{j}) \# g' 
		\end{align*}
		Then, for $h = \$ \# enc(w_{j}) \# \$ $, $gh \in L_A^k$ and $fh \notin L_A^k$.
		
		\item[The symbols $\$,\# $ appear the same number of times in $g$ and in $f$.]
		Then, for some $f',g' \in \{0,1\}^*$,
		\begin{align*}
		f &=  t' \$ \# enc(w_0) \# enc(w_1) \# \cdots \# enc(w_{i}) \# f' 
		~~\text{ and}\\
		g &=  t' \$ \# enc(w_0) \# enc(w_1) \# \cdots \# enc(w_{i}) \# f' g' 
		\end{align*} 
		and $|g'|>0$.
		Therefore, $f'g'$ is a prefix of $enc(w_{i+1})$. Let $h_0$ be such that $f'g'h_0 = enc(w_{i+1})$.
		Then, for $h = h_0 \#\$ \# enc(w_{i+1}) \# \$ $, $gh \in L_A^k$ and $fh \notin L_A^k$.
		\qedhere
		
	\end{description}
\end{proof}

\begin{cor}\label{cor:tK_and_prefixes}
	If $\ftV\neq \ftVV$ are prefixes of $t_K$, then there is some $\tV \in \Sigma_\$^\omega$, such that $\ftV\tV \in L_A^k \cdot \Sigma_\$^\omega$ iff $\ftVV\tV \notin L_A^k \cdot \Sigma_\$^\omega$.	
\end{cor}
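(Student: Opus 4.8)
The plan is to deduce \Cref{cor:tK_and_prefixes} directly from \Cref{lem:tK_and_prefixes}, using \Cref{lem:lang1-fin-inf-trc} only to upgrade a finite separating suffix into an infinite one. So let $\ftV \neq \ftVV$ be two distinct prefixes of $t_K$. First I would invoke \Cref{lem:tK_and_prefixes}, which hands me a finite word $h \in \Sigma_\$^*$ such that $\ftV h \in L_A^k$ holds exactly when $\ftVV h \notin L_A^k$; in particular exactly one of $\ftV h$, $\ftVV h$ lies in $L_A^k$, so after renaming $\ftV$ and $\ftVV$ if necessary I may assume $\ftV h \in L_A^k$ and $\ftVV h \notin L_A^k$.

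The next step is to lift this finite witness to an infinite trace. Since $\ftVV h \notin L_A^k$, \Cref{lem:lang1-fin-inf-trc} gives an infinite trace $\tV_0 \in \Sigma_\$^\omega$ with $\ftVV h\, \tV_0 \notin L_A^k \cdot \Sigma_\$^\omega$ (one may even take $\tV_0 = \#^\omega$, exactly as in the proof of that lemma: appending only $\#$'s cannot complete a fresh ``$\# w \# v \$ u' \# w' \# \$$'' pattern, since it introduces no new $\$$). Now set $\tV := h\, \tV_0 \in \Sigma_\$^\omega$. Then $\ftVV\tV = \ftVV h\, \tV_0 \notin L_A^k \cdot \Sigma_\$^\omega$, while $\ftV\tV = (\ftV h)\, \tV_0$ lies in $L_A^k \cdot \Sigma_\$^\omega$ straight from the definition of the concatenation, because $\ftV h \in L_A^k$ and $\tV_0 \in \Sigma_\$^\omega$. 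Hence $\ftV\tV \in L_A^k \cdot \Sigma_\$^\omega$ if and only if $\ftVV\tV \notin L_A^k \cdot \Sigma_\$^\omega$, which is the claim; the case where instead $\ftVV h \in L_A^k$ is identical with the roles of $\ftV$ and $\ftVV$ interchanged.

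As for where the difficulty lies: essentially all of it is already absorbed into \Cref{lem:tK_and_prefixes}, whose proof is the delicate case split on the relative position of $\ftV$ and $\ftVV$ inside the rigid block layout $t_0 \$ s_0 \$ t_1 \$ \cdots \$ t_{K-1}\$ s_{K-1}$ of $t_K$ and on the local structure of the $enc(\cdot)$-blocks, using that $L_A^k$ is suffix-closed. At the level of the corollary the only thing that could go wrong is that appending an infinite tail might move $\ftVV h$ back into $L_A^k \cdot \Sigma_\$^\omega$, and \Cref{lem:lang1-fin-inf-trc} rules this out; so I expect the corollary's proof to be only a couple of lines once the two cited lemmas are in place.
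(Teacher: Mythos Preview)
Your proposal is correct and follows exactly the approach the paper takes: the paper's proof is the single line ``By \Cref{lem:tK_and_prefixes,lem:lang1-fin-inf-trc}'', and what you have written is precisely the unpacking of how those two lemmas combine---use the finite separating suffix $h$ from \Cref{lem:tK_and_prefixes}, then apply the contrapositive of \Cref{lem:lang1-fin-inf-trc} to the side not in $L_A^k$ to obtain the infinite tail. Your assessment that all the real work lives in \Cref{lem:tK_and_prefixes} is also accurate.
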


\begin{proof}
	By \Cref{lem:tK_and_prefixes,lem:lang1-fin-inf-trc}.
	\qed  \end{proof}

\begin{proposition}\label{prop:tKsimple}
	If $\mV$ is a deterministic regular monitor that recognizes $L_A^k \cdot \Sigma_\$^\omega$, then $t_{K}$ is a simple trace for $\mV$.
\end{proposition}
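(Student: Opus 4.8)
The plan is a proof by contradiction: assume $t_K$ is \emph{not} simple for $\mV$ and derive a contradiction with the hypothesis that $\mV$ recognizes $L_A^k\cdot\Sigma_\$^\omega$. The two tools are the pumping lemma for deterministic monitors (\Cref{lem:find_px_det}) and the prefix-distinguishability of $t_K$ (\Cref{cor:tK_and_prefixes}).

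First I would record that $\mV$ has a full run $\mV\wtraS{t_K}\mV'$, \ie it never gets stuck on $t_K$. A monitor can only get stuck at a non-verdict state, since by rule \rtit{Ver} every verdict performs every action; hence if $\mV$ could read only a proper prefix $g$ of $t_K$ and then not the next symbol $a$, it would reach no verdict along $ga$ and could accept no trace $ga\tV$. But for any $w\in W$ the string $ga\,\$\#w\#\$\#w\#\$$ lies in $L_A^k$ (in the definition take $u=ga\,\$$, $v=u'=v'=\varepsilon$ and $w'=w$, using $w\equiv w$), so $ga\,\$\#w\#\$\#w\#\$\cdot\#^\omega\in L_A^k\cdot\Sigma_\$^\omega$ and $\mV$ must accept it --- a contradiction. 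An induction over the prefixes of $t_K$ then yields the run $\mV\wtraS{t_K}\mV'$, which is necessarily non-simple.

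Applying \Cref{lem:find_px_det} to this derivation gives $t_K=\ftV_1\ftV_2\ftV_3$ with $|\ftV_2|>0$ and a monitor $\mVV$ with $\mV\wtraS{\ftV_1}\mVV\wtraS{\ftV_2}\mVV$ (hence also $\mV\wtraS{\ftV_1\ftV_2}\mVV$). The heart of the argument is that $\ftV_1$ and $\ftV_1\ftV_2$ are \emph{distinct} prefixes of $t_K$ that drive $\mV$ to the \emph{same} state $\mVV$. By \Cref{cor:tK_and_prefixes} pick $\tV\in\Sigma_\$^\omega$ with, say, $\ftV_1\tV\in L_A^k\cdot\Sigma_\$^\omega$ and $\ftV_1\ftV_2\tV\notin L_A^k\cdot\Sigma_\$^\omega$; the opposite case is symmetric after swapping $\ftV_1$ and $\ftV_1\ftV_2$. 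I would first check that $\mV$ reaches no $\yes$-verdict along any prefix of $\ftV_1\ftV_2$: otherwise, by verdict persistence (\Cref{lem:ver-persistence}) together with determinism, one gets $\mVV=\yes$, and then $\mV$ would accept \emph{both} $\ftV_1\tV$ and $\ftV_1\ftV_2\tV$, contradicting the choice of $\tV$. Since $\mV$ recognizes $L_A^k\cdot\Sigma_\$^\omega$ positively and $\ftV_1\tV$ belongs to it, $\mV$ accepts a finite prefix of $\ftV_1\tV$; by the previous check this prefix has the form $\ftV_1 y$ with $y$ a prefix of $\tV$, and routing this accepting run through $\mVV$ (using determinism) gives $\mVV\wtraS{y}\yes$. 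But then $\mV\wtraS{\ftV_1\ftV_2}\mVV\wtraS{y}\yes$, so $\mV$ accepts $\ftV_1\ftV_2 y$ and hence $\ftV_1\ftV_2\tV$, contradicting $\ftV_1\ftV_2\tV\notin L_A^k\cdot\Sigma_\$^\omega$. This contradiction shows that $t_K$ is simple for $\mV$.

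The step I expect to be the main obstacle is the careful handling of $\tau$-transitions and determinism: one must make precise that a deterministic regular monitor has an essentially unique run on each finite trace, so that ``the state reached after reading $\ftV_1$'' is well defined and equals $\mVV$, and so that an accepting run on $\ftV_1 y$ decomposes as a run on $\ftV_1$ reaching $\mVV$ followed by a run of $\mVV$ on $y$ reaching $\yes$ (the same facts underlie \Cref{lem:find_px_det}). A secondary delicate point is dealing with verdict states that occur in the middle of a run: this is handled uniformly by invoking \Cref{lem:ver-persistence} to collapse ``some $\yes$ is reached before $\mVV$'' to ``$\mVV$ itself is $\yes$'', which is precisely what lets the two symmetric cases be treated by the same argument.
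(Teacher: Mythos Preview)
Your proof is correct and follows essentially the same approach as the paper's: assume $t_K$ is not simple, apply \Cref{lem:find_px_det} to obtain a loop $\mV\wtraS{\ftV_1}\mVV\wtraS{\ftV_2}\mVV$, and derive a contradiction from \Cref{cor:tK_and_prefixes} since $\ftV_1$ and $\ftV_1\ftV_2$ drive $\mV$ to the same state yet are distinguishable. Your version is more detailed in two respects: you explicitly verify that a derivation $\mV\wtraS{t_K}\mV'$ exists (a hypothesis of \Cref{lem:find_px_det} that the paper leaves implicit), and you unpack the use of determinism in the equivalence ``$\ftV_1\tV$ is accepted iff $\mVV$ accepts $\tV$ iff $\ftV_1\ftV_2\tV$ is accepted'' that the paper states in one line.
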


\begin{proof}
	If $t_K$ is not simple, then by  \Cref{lem:find_px_det}, $t_K = \ftV_1\ftV_2\ftV_3$, such that $|\ftV_2|>0$ and there is a monitor $\mVV$, such that
	$\mV \wtraS{\ftV_1}\mVV \wtraS{\ftV_2}\mVV$. But according to \Cref{cor:tK_and_prefixes}, there is some $\tV$, such that $\ftV_1\tV \in L_A^k \cdot \Sigma_\$^\omega$ iff $\ftV_1\ftV_2\tV \notin L_A^k \cdot \Sigma_\$^\omega$.
	This
	is a contradiction, because $\ftV_1\tV \in L_A^k \cdot \Sigma_\$^\omega$ iff $\mVV$ accepts $\tV \cdot \Sigma_\$^\omega$ iff $\ftV_1\ftV_2\tV \in L_A^k \cdot \Sigma_\$^\omega$.
	\qed  \end{proof}

\begin{proof}[of \Cref{cor:lower-for-det}]
	An immediate consequence of \Cref{prop:tKsimple,lem:simple_traces}.
	\qed 
\end{proof}

\paragraph{The hardness of regularization}

\begin{proof}[of \Cref{prp:regular-bad}]
	First, we sketch
	the proof of the first part of the proposition.
	We observe that
	$smaller$ accepts
	a trace $w_1\# w_2 \tV$, where $w_1,w_2 \in W$,
	exactly when $w_1$ encodes a smaller sequence than $w_2$, with respect to the lexicographic ordering.
	Then, $\mV_U$ keeps reading blocks of bits between the $\# $ separators, while ensuring that each of these is an element of $W$ (using monitor $perm$), and that it either is the last such block of bits (using monitor $last$), or that it encodes a smaller sequence than the next one (using monitor $smaller$). 
	
	It now suffices to prove that, for any ordered sequence $a_1 a_2 \cdots a_{c}$ of strings from $\{0,1\}^k$, the finite trace
	$s = \# enc(a_1) \# enc(a_2) \# \cdots \# enc(a_{c}) $ is simple for any regular monitor that accepts exactly the infinite extensions of $L_U^k$.
	If $\ftV$ is not simple for $\mV$, then, according to Lemma \ref{lem:pump_not_simple},
	for every $\mV \wtraS{\ftV} \mVV$, there are
	$\ftV = \ftV_1\ftV_2\ftV_3$, such that $|\ftV_2|>0$ and  for every $i \geq 0$, $\mV \wtraS{\ftV_1 \ftV_2^i \ftV_3} \mVV$.
	Therefore, if $\mV \wtraS{\ftV } \mVV \wtraS{\$0^j} \yes $ for some $j \geq 0$, then
	$\mV \wtraS{\ftV_1 \ftV_2^2 \ftV_3} \mVV \wtraS{\$0^j} \yes $ and $\mV \wtraS{\ftV_1 \ftV_2^3 \ftV_3} \mVV \wtraS{\$0^j} \yes $.  
	If $|\ftV_2|$ is not a multiple of $k(l+1) + 1$, then
	it is not hard to see that we reach a contradiction, as $\ftV_1 \ftV_2^2 \ftV_3 \$ 0^j \notin L_U^k$%
	, because it includes a block of bits that is not in $W$.
	Otherwise, 
	$\ftV_2$ is of the form $f \# g$ or $f \# enc(a_i) \# \cdots \# enc(a_{i+j}) \# g$, where $fg \in \{0,1\}^{k(l+1)}$ and $j \geq 0$.
	If $gf \notin W$, then $\ftV_1 \ftV_2^2 \ftV_3 \$ 0^j \notin L_U^k$, and again, we have a contradiction.
	But even if $gf \in W$, we see that $gf$ appears twice in $\ftV_1 \ftV_2^3 \ftV_3$, and again we have a contradiction, as $\ftV_1 \ftV_2^3 \ftV_3 \$ 0^j \notin L_U^k$, because 
	there appear two elements of $W$ in $\ftV_1 \ftV_2^3 \ftV_3$ that encode the same string.
	%
	%
	\qed  \end{proof}


\subsection*{Omitted Proofs from \Cref{sec:logical-consequences}}

\begin{proof}[of \Cref{thm:logic-lower-bounds}]
	Let $k$ and $\mV_U$ be as defined in \Cref{sec:lower}.
	\Cref{thm:monitorability-maximality} asserts the existence of a formula $\hV$, such that $l(\hV) = O(l(\mV_U))$ and $\hSem{\hV} = L_a(\mV_U)\cdot \Sigma_\$^\omega = L_U^k\cdot \Sigma_\$^\omega$.
	Now, let $\hVV \in \SHML$, where $\hSemL{\hV}=\hSemL{\hVV}$.
	According to \Cref{thm:monitorability-maximality}, there is a regular monitor $\mV$, such that $l(\mV) = O(l(\hVV))$, and $\hSem{\hV} = \hSemL{\hVV} = L_a(\mV)\cdot \Sigma_\$^\omega$.
	Therefore, $\mV$ recognizes $L_U^k\cdot \Sigma_\$^\omega$, and
	by \Cref{prp:regular-bad}, $l(\mV) = 2^{\Omega(2^k)} = 2^{\Omega(2^{\sqrt{l(\hV)}})}$.
	This yields that $l(\hVV) = 2^{\Omega(2^k)} = 2^{\Omega(2^{\sqrt{l(\hV)}})}$.
	\qed 
\end{proof}

\end{document}